\documentclass[runningheads,a4paper]{article}

\usepackage{here}
\usepackage[utf8]{inputenc}
\usepackage{graphicx}
\usepackage{latexsym}
\usepackage{amsmath}
\usepackage{lmodern,tabularx,booktabs}

\usepackage{amsthm}
\usepackage{amssymb}
\usepackage[ruled,vlined,linesnumbered]{algorithm2e}
\usepackage{enumerate}
\SetKwFor{ForEach}{for each}{do}{endfch}
\SetKwFor{ForAll}{for all}{do}{endfall}
\SetKw{KwTrue}{true}
\SetKw{KwFalse}{false}
\DontPrintSemicolon

\newcommand{\argmin}{\ensuremath{\mathop{\rm argmin} \limits}}

\newtheorem{definition}{Definition}
\newtheorem{theorem}{Theorem}

\newcommand{\rfig}[1]{Figure~\ref{#1}}

\newcommand{\rthm}[1]{Theorem~\ref{#1}}

\newcommand{\rsec}[1]{Section~\ref{#1}}

\newcommand{\rdef}[1]{Definition~\ref{#1}}

\title{Cyclic Oritatami Systems Cannot Fold\\Infinite Fractal Curves}

\author{Yo-Sub Han \and Hwee Kim}

\begin{document}
\maketitle
\begin{abstract}
RNA cotranscriptional folding is the phenomenon in which an RNA transcript folds upon itself while being synthesized out of a gene. The oritatami system (OS) is 
a computation model of this phenomenon, which lets its sequence (transcript) of beads (abstract molecules) fold cotranscriptionally by the interactions between beads according to the binding ruleset.
The OS is an useful computational model for predicting and simulating an RNA folding as well as constructing a biological structure.  
A fractal is an infinite pattern that is self-similar across different scales, and is an important structure in nature. Therefore, the fractal construction using self-assembly
is one of the most important problems. We focus on the problem of generating an infinite fractal instead of a partial finite fractal, which is much more challenging. 
We use a cyclic OS, which
has an infinite periodic transcript, to generate an infinite structure. We prove a negative result that it is impossible to make a Koch curve or a Minkowski curve, both of which are fractals, 
using a cyclic OS. We then establish sufficient conditions of infinite aperiodic curves that a cyclic OS cannot fold.

\end{abstract}

\section{Introduction}

Self-assembly is the process where smaller components---usually molecules---autonomously assemble and form a larger complex structure. Self-assembly plays an important role in constructing biological structures and high polymers~\cite{WhitesidesB02}. 
One well-known mathematical model of the self-assembly phenomenon is the abstract tile assembly model (aTAM)~\cite{Winfree98}. Recently, Geary et al.~\cite{GearyMSS16} proposed a new computation model called the oritatami system (OS) that simulates the cotranscriptional self-assembly based on the experimental RNA transcription called RNA origami~\cite{GearyRA14}. In general, the OS assumes that a sequence of molecules is transcribed linearly, and predicts the geometric shape of the autonomous folding of the sequence based on the reaction rate of the folding. The OS consists of a sequence of beads (which is the transcript) and a set of rules for possible intermolecular reactions between beads. For each bead in the sequence, the system takes a lookahead of a few upcoming beads and determines the best location of the bead that maximizes the number of possible interactions from the lookahead. The lookahead represents the reaction rate of the cotranscriptional folding and the number of interactions represents the energy level (See \rfig{fig_osex2} for the analogy between RNA origami and oritatami system). 

\begin{figure}[tb]
	\centering
	\includegraphics[width=1.0\textwidth, bb=0 0 550 300]{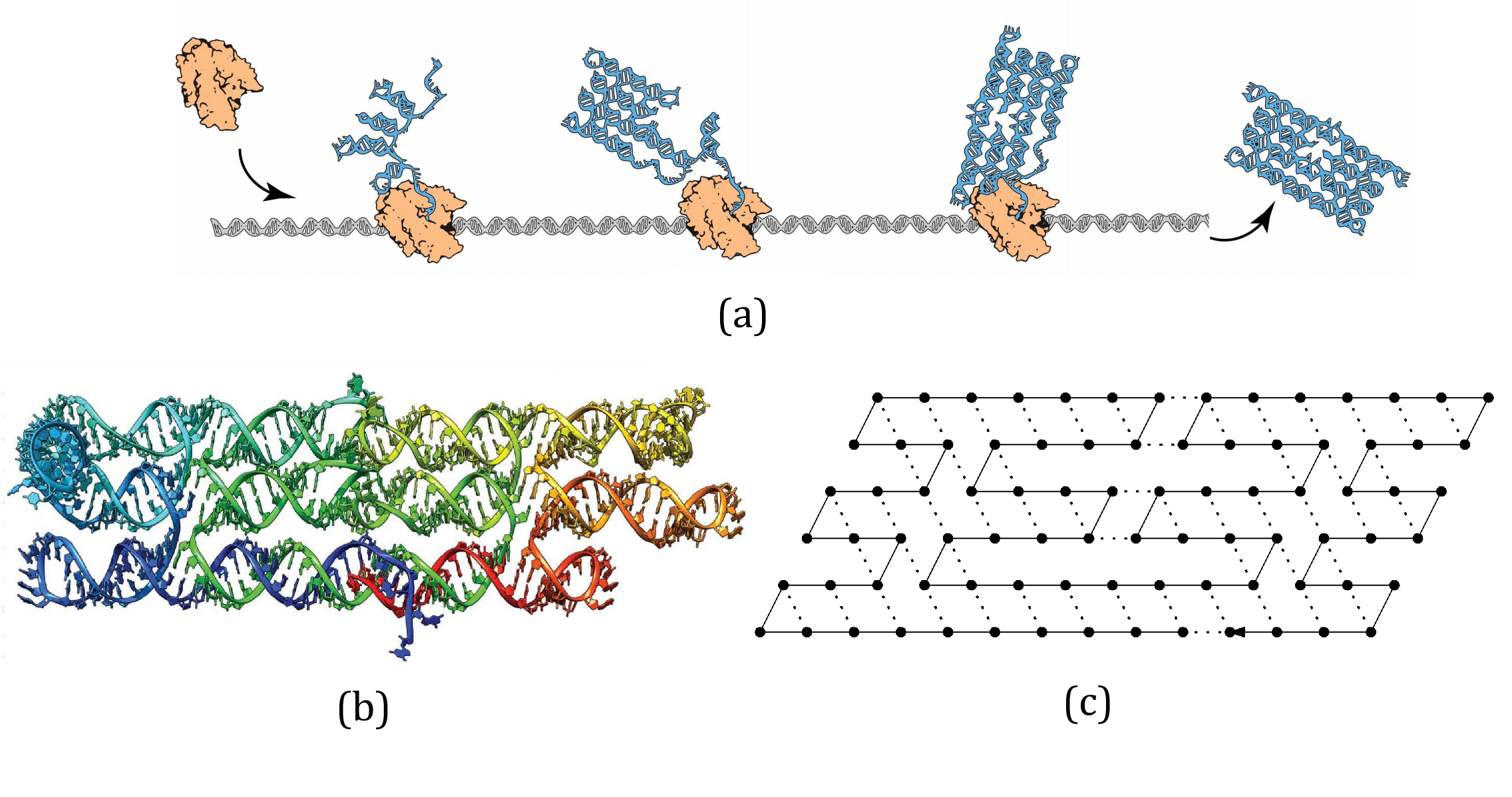}
	\caption{The motivation of the oritatami system. (a) An illustration of an RNA Origami~\cite{GearyMSS16}, which transcribes an RNA strand that self-assembles. (b) The product of an RNA Origami. (c) Abstraction of the product in the oritatami system.}
	\label{fig_osex}
\end{figure}

\begin{figure}[htb]
	\centering
	\includegraphics[width=1.0\columnwidth]{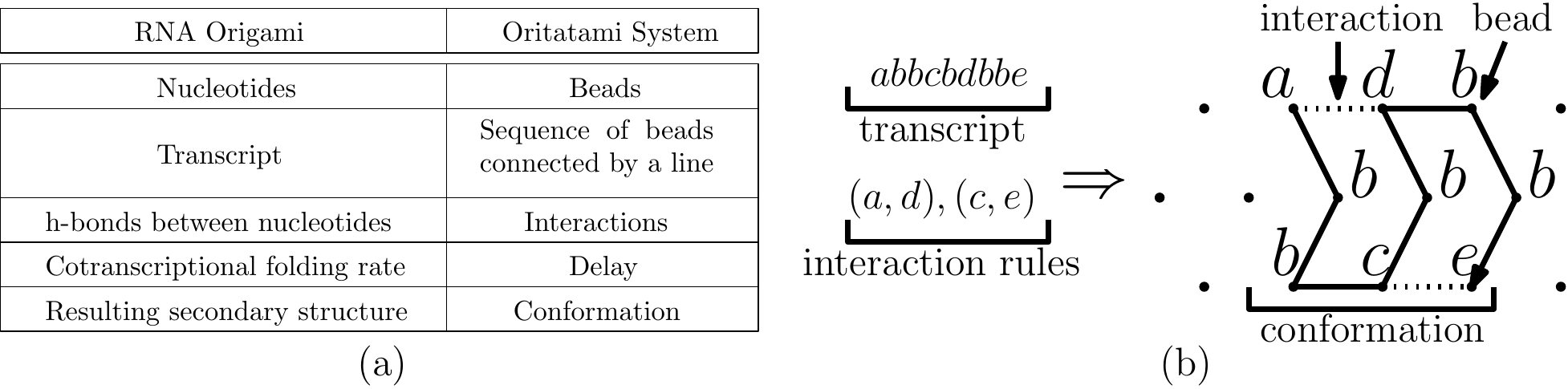}
	\caption{(a) Analogy between RNA origami and oritatami system. (b) Visualization of oritatami system and its terms.}
	\label{fig_osex2}
\end{figure}

The OS is a computation model based on geometric aspects, and, thus, it is important to design and analyze the OS 
in both computational and geometric perspectives. From the computational point of view, the OS is proved to be Turing complete using a cyclic tag system simulator~\cite{GearyMSS16}. Designs for other computational problems such as binary counting~\cite{GearyMSS15} and  Boolean formula simulation~\cite{HanKOS18} are also established. 
Researchers also proved a few decision properties and proposed various optimization methods for the OS design~\cite{HanK17,HanKRS17,OtaS17}. 
From the geometric point of view, Rogers and Seki~\cite{RogersS17} proved the decidability of geometric structure constructions based on the delay. Recently, Masuda et al.~\cite{MasudaSU18} proposed how to construct a finite Heighway dragon using a cyclic OS, which has a periodic transcript.

A fractal is an infinite pattern that is self-similar across different scales, and is an important structure in nature. The construction of fractals is one of the most important topics in both geometric computation models~\cite{HendricksOPRT18,HendricksO17,LathropLS09} and experiments~\cite{TesoroA16,TikhomirovPQ17}. For constructing fractals by the OS, we need a few assumptions.
\begin{enumerate}
	\item Since the OS transcribes an RNA single strand, it is natural to consider fractal curves, which are infinite sequences of segments (and points). 
	\item We only consider a deterministic OS (that only folds into a unique conformation). Note that it is trivial to make a nondeterministic OS that may fold into several different structures including a target structure. 
	\item We consider an infinite fractal construction. Masuda et al.~\cite{MasudaSU18} proposed how to construct a finite fractal by implementing a counter and an automaton periodically inside the transcript. This approach can be used to construct an arbitrary long fractal but not an infinite one, since the counter is finite.
	\item Since we need an infinite transcript with a formal finite representation, it is a natural choice to use a cyclic OS that has an infinitely repeated transcript. Researchers already 
	used cyclic OSs to construct conformations that can grow infinitely~\cite{GearyMSS15,GearyMSS16}.
\end{enumerate}

Although fractals have self-similarity, they do not have repeated structures with any fixed period. In aTAM, it is known that we can assemble a certain type of fractals infinitely even with limited tile types~\cite{LathropLS09}. In OS, we claim that under some reasonable assumptions, it is impossible to fold some infinite fractals with a cyclic OS. We first define the folding of the curve by an OS as mapping segments and points of the curve to sets of points on the OS triangular lattice. 

Before presenting our main contributions on the impossibility of OS
folding, we start with two well-known fractal curves (Koch and Minkowski curves) with respect to the OS folding and provide a brief idea on the impossibility results. We, in particular, prove that regardless of the delay and the period, it is impossible to fold Koch and Minkowski curves in Sections~\ref{sec_koch} and \ref{sec_min}, respectively. Then we
generalize this impossibility to infinite aperiodic curves and
establish sufficient conditions together with main contributions in
\rsec{sec_imp}.

We then examine two well-known fractal curves (Koch and Minkowski curves), and prove that regardless of the delay and the period, it is impossible to fold such curves under our assumptions. We expand the result to infinite aperiodic curves and establish sufficient conditions to prove impossibility of the folding.
	
\section{Preliminaries}
\label{sec_pre}

Let $w = a_1 a_2 \cdots a_n$ be a string over $\Sigma$ for some integer~$n$ and bead types~$a_1, \ldots, a_n$\\$ \in \Sigma$. 
The \emph{length}~$|w|$ of $w$ is $n$. 
For two indices~$i,j$ with $1\le i\le j\le n$, we let $w[i,j]$ be the substring~$a_ia_{i{+}1}\cdots a_{j{-}1}a_j$; we use $w[i]$ to denote $w[i,i]$. 
We use $w^n$ to denote the concatenation of $n$ copies of $w$.

Oritatami systems operate on the triangular lattice~$\Lambda_o$ with the vertex set~$V$ and the edge set~$E$. A \emph{configuration} is a triple~$(P, w, H)$ of a directed path~$P$ in $\Lambda_o$, $w \in \Sigma^* \cup \Sigma^\omega$, and a set $H \subseteq \left\{(i, j) \bigm| 1 \le i, i+2 \le j, \{P[i], P[j]\} \in E \right\}$ of interactions.
This is to be interpreted as the sequence~$w$ being folded while its $i$-th bead $w[i]$ is placed on the $i$-th point $P[i]$ along the path and there is an interaction between the $i$-th and $j$-th beads if and only if $(i, j) \in H$.
Configurations~$(P_1, w_1, H_1)$ and $(P_2, w_2, H_2)$ are \emph{congruent} provided $w_1 = w_2$, $H_1 = H_2$, and $P_1$ can be transformed into $P_2$ by a combination of a translation, a reflection, and rotations by $60$~degrees.
The set of all configurations congruent to a configuration~$(P, w, H)$ is called the \emph{conformation} of the configuration and denoted by $C=[(P, w, H)]$.
We call $w$ a \emph{primary structure} of $C$.

A ruleset~$\mathcal{H}\subseteq \Sigma\times\Sigma$ is a symmetric relation specifying between which bead types can form an interaction. 
An interaction~$(i, j) \in H$ is \emph{valid with respect to $\mathcal{H}$}, or simply \emph{$\mathcal{H}$-valid}, if $(w[i], w[j]) \in \mathcal{H}$.
We say that a conformation~$C$ is \emph{$\mathcal{H}$-valid} if all of its interactions are $\mathcal{H}$-valid.
For an integer~$\alpha \ge 1$, $C$ is \emph{of arity $\alpha$} if the maximum number of interactions per bead is $\alpha$, that is, if for any $k \ge 1$, $\bigl|\{i \mid (i, k) \in H \} \bigr| + \bigl|\{j \mid (k, j) \in H\} \bigr| \le \alpha$ and this inequality holds as an equation for some $k$.
By $\mathcal{C}_{\le \alpha}$, we denote the set of all conformations of arity at most $\alpha$.

Oritatami systems grow conformations by elongating them under their own ruleset.
For a finite conformation~$C_1$, we say that a finite conformation~$C_2$ is an \emph{elongation} of $C_1$ by a bead~$b \in \Sigma$ under a ruleset~$\mathcal{H}$, written as $C_1 \stackrel{\mathcal{H}}{\to}_b C_2$, if there exists a configuration~$(P, w, H)$ of $C_1$ such that $C_2$ includes a configuration~$(P\cdot p, w\cdot b, H \cup H')$, where $p \in V$ is a point not in~$P$ and $H' \subseteq \left\{(i, |P|{+}1) \bigm| 1 \le i \le |P|-1, \{P[i], p\} \in E, (w[i], b) \in \mathcal{H} \right\}$.
This operation is recursively extended to the elongation by a finite sequence of beads as follows:
For any conformation~$C$, $C \stackrel{\mathcal{H}}{\to}_\lambda^* C$; and
for a finite sequence of beads~$w$ and a bead~$b$, a conformation~$C_1$ is elongated to a conformation $C_2$ by $w\cdot b$, written as $C_1 \stackrel{\mathcal{H}}{\to}_{w\cdot b}^* C_2$, if there is a conformation~$C'$ that satisfies $C_1 \stackrel{\mathcal{H}}{\to}_w^* C'$ and $C' \stackrel{\mathcal{H}}{\to}_b C_2$.

An \emph{oritatami system} (OS) is a $6$-tuple~$\Xi = (\Sigma, w, \mathcal{H}, \delta, \alpha, C_\sigma=[(P_\sigma,w_\sigma,H_\sigma)])$, where $\mathcal{H}$ is a \emph{ruleset}, $\delta \ge 1$ is a \emph{delay}, and $C_\sigma$ is an $\mathcal{H}$-valid initial \emph{seed} conformation of arity at most $\alpha$, upon which its \emph{transcript} $w \in \Sigma^* \cup \Sigma^\omega$ is to be folded by stabilizing beads of $w$ one at a time and minimize energy collaboratively with the succeeding $\delta-1$ nascent beads.
The energy of a conformation~$C = [(P, w, H)]$ is $U(C)=-|H|$; namely, the more interactions a conformation has, the more stable it becomes.
The set~$\mathcal{F}(\Xi)$ of conformations \emph{foldable} by this system is recursively defined as follows: The seed~$C_\sigma$ is in $\mathcal{F}(\Xi)$; and provided that an elongation~$C_i$ of $C_\sigma$ by the prefix~$w[1:i]$ be foldable (i.e., $C_0 = C_\sigma$), its further elongation~$C_{i{+}1}$ by the next bead $w[i{+}1]$ is foldable if
\begin{equation}\label{eq:fold}
C_{i{+}1} \in \argmin_{\substack{C \in \mathcal{C}_{\le \alpha} \ \mbox{s.t.} \\ C_i \stackrel{\mathcal{H}}{\to}_{w[i{+}1]} C}} \min \left\{U(C') \Bigm| C \stackrel{\mathcal{H}}{\to}_{w[i{+}2:i{+}k]}^* C', k \le \delta, C' \in \mathcal{C}_{\le \alpha}
\right\}.
\end{equation}
Once we have $C_{i{+}1}$, we say that the bead~$w[i{+}1]$ and its interactions are \emph{stabilized} according to $C_{i{+}1}$.
A conformation foldable by $\Xi$ is \emph{terminal} if none of its elongations is foldable by $\Xi$. An OS is \emph{deterministic} if, for all $i$, there exists at most one $C_{i{+}1}$ that satisfies \eqref{eq:fold}. Namely, a deterministic OS folds into a unique terminal conformation. An OS is \emph{cyclic} if its transcript~$w=w_o^\omega$ is repetition of a string~$w_o$. We say that the OS has a period~$|w_o|$.

\begin{figure}[htb]
	\centering
	\includegraphics[width=1.0\textwidth]{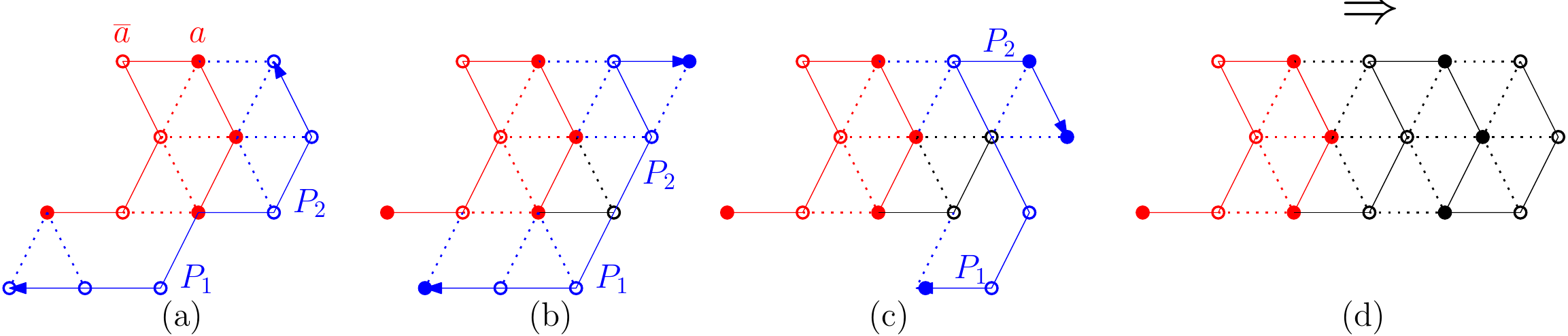}
	\caption{An example OS with delay~$3$ and arity~$4$. The seed is colored in red, elongations are colored in blue, and the stabilized beads and interactions are colored in black.}
	\label{fig_glider}
\end{figure}

\rfig{fig_glider} illustrates an example of an OS with delay~$3$, arity~$4$, ruleset~$\{(a,\overline{a})\}$ and transcript~$w=\overline{a}\overline{a}\overline{a}aaa\overline{a}\overline{a}\overline{a}$; in (a), the system tries to stabilize the first bead~$\overline{a}$ of the transcript, and the elongation~$P_1$ gives $2$~interactions, while the elongation~$P_2$ gives $4$~interactions, which is the most stable one. Thus, the first bead~$\overline{a}$ is stabilized according to the location in $P_2$. In (b) and (c), $P_2$ is the most stable elongation and $\overline{a}$'s are stabilized according to $P_2$. As a result, the terminal conformation is given as in (d). Note that the system grows the terminal conformation straight without external interactions, and we can use $w=(\overline{a}\overline{a}\overline{a}aaa)^\omega$ to fold an infinite periodic conformation. This example is called a \emph{glider}~\cite{GearyMSS15}.

The bead stabilization in OS is a local optimization of finding the best position of the bead using the next $\delta$~beads. Thus, the stabilization of a bead~$w[i]$ in a delay-$\delta$ OS is not affected by any bead whose distance from $w[i{-}1]$ is 
greater than $\delta+1$. On the triangular lattice, we can draw a hexagonal border of radius~$\delta+1$ from $w[i{-}1]$ to identify the set of points that may affect the stabilization of $w[i]$. 
While stabilizing a bead~$w[i]$, we define the \emph{event horizon} of $w[i]$ to be a partial conformation within this hexagon. Namely, the event horizon is the context used to stabilize $w[i]$. Thus, if two beads $w[i]$ and $w[j]$ have the same event horizon, then $w[i]$ and $w[j]$ are
stabilized congruently (See \rfig{fig_eh}). We define an event horizon for a partial transcript~$w[i,j]$ to be the union of event horizons of beads in $w[i,j]$ while stabilizing $w[i]$.

\begin{figure}[htb]
	\centering
	\includegraphics[width=0.7\textwidth]{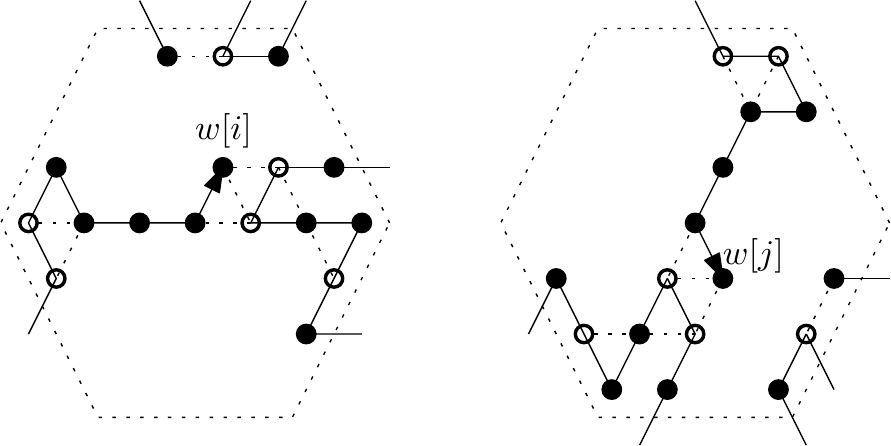}
	\caption{Two same event horizons when $\delta=2$ and we have two bead types (black and white circles). The current bead, pointed by an arrow, is stabilized in the same way in both event horizons.}
	\label{fig_eh}
\end{figure}

An L-system is a parallel rewriting system and its recursive nature makes the system easy to describe fractal-like structures~\cite{RozenburgS80}. 
An \emph{L-system} is defined as $G=(V,C,\omega,P)$, where
\begin{itemize}
	\item $V$ is the set of variables that can be replaced by production rules,
	\item $C$ is the set of constants that do not get replaced,
	\item $\omega\in(V\cup C)^*$ is the axiom, the initial string, and
	\item $P\subseteq V\times (V\cup C)^*$ is the set of production rules defining rewriting of variables.
\end{itemize}

The system starts with $\omega$, and as many rules as possible 
are applied simultaneously for each iteration. With graphical semantics on variables and constants, the L-system is often used to represent self-similar fractals. In this paper, we assume that curves are represented by strings, whose characters represent turns and unit segments. Then, an infinite curve is periodic if there exists a periodic string representation with a fixed finite period, and is aperiodic otherwise. Note that all fractal infinite curves are aperiodic.

\section{Impossibility of the Infinite Koch Curve}
\label{sec_koch}

We start with one example of infinite fractal curves---the Koch curve. The Koch curve can be constructed by starting with a segment, then recursively altering each line segment as follows:
\begin{enumerate}
	\item Divide the line segment into three segments of equal length.
	\item Draw an equilateral triangle that has the middle segment from step 1 as its base and points outward.
	\item Remove the line segment that is the base of the triangle from step 2.
\end{enumerate}
Using the L-system, the Koch curve can be encoded as follows:
\begin{itemize}
	\item Variable: $F$
	\item Constants: $+,-$
	\item Axiom: $F$
	\item Production Rule: $F\to F+F-F+F$,
\end{itemize}
where $F$ denotes a segment, $-$ denotes $120^\circ$ right turn and $+$ denotes $60^\circ$ left turn. \rfig{fig_koch_ex} illustrates the Koch curve after three iterations.

\begin{figure}[htb]
	\centering
	\includegraphics[scale=0.4]{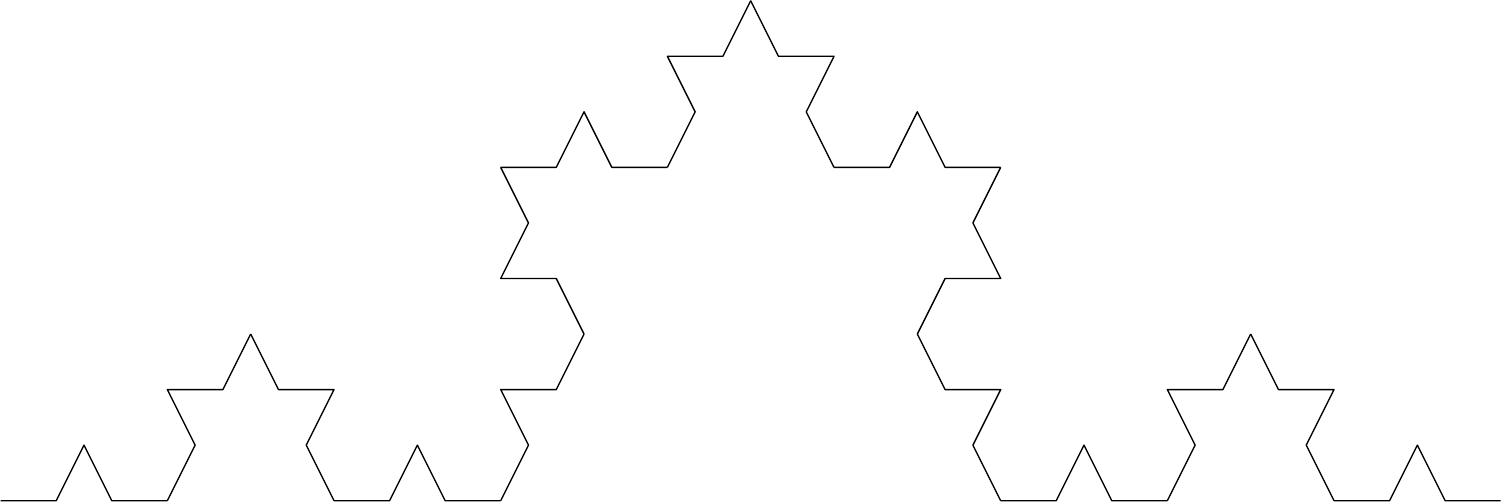}
	\caption{The Koch curve after three iterations}
	\label{fig_koch_ex}
\end{figure}

For this specific curve, we first make a few assumptions about the curve representation by a cyclic OS and the OS itself. We prove that it is impossible to draw an infinite Koch curve using a cyclic OS under our assumptions.

Let a \emph{shape} be a set of points on the triangular lattice~$\Lambda_o$, whose grid graph is connected. A curve can be represented as a sequence of alternating points and segments. We say that a sequence of shapes represents a curve if there exists one-to-one correspondence between shapes and alternating points and segments, and a point and a segment should be adjacent on the curve if and only if two shapes corresponding to them are adjacent. We formally define the drawing of the curve by a deterministic OS as follows:

\begin{definition}
	\label{def_draw}
	Given a (possibly infinite) curve on the plane and the (possibly infinite) sequence~$(S_k)$ of shapes that represents the curve, we say that a deterministic OS \emph{draws} the curve if the following condition holds: There exists a (possibly infinite) sequence~$(i_k)$ of indices that corresponds to the sequence of shapes, where, for all $k$'s, there exists a partial configuration for $w[i_{k{-}1}{+}1:i_k]$ that folds within $S_k$. We say that the OS \emph{covers} $S_k$ with the partial transcript~$w[i_{k{-}1}{+}1:i_k]$.
\end{definition}

\rfig{fig_draw} shows two examples of curve drawing by an OS. Here, the target curve is represented by three shapes~$(S_1,S_2,S_3)$. By \rdef{def_draw}, an OS draws the curve in \rfig{fig_draw}~(a) but does not in \rfig{fig_draw}~(b) since the partial configuration for $w[3:11]$ is not within $S_2$. Note that shapes limit paths of conformations, and it is not necessary to fill all points in the shape with the conformation.

From the design perspective, it is crucial to assume this locality of partial configuration. OS designs are usually modular~\cite{GearyMSS15,GearyMSS16,HanKOS18,MasudaSU18}---a partial transcript is folded locally under a controlled context, and we connect these partial transcripts to perform complex computations. Especially, for an infinite transcript, it becomes almost impossible to remove unintended interferences without this locality. Previous cyclic OSs such as a binary counter~\cite{GearyMSS15} or a cyclic tag system~\cite{GearyMSS16} follow this assumption. Remark that the drawing in \rdef{def_draw} is general in the sense that it does not restrict the OS to be infinite or cyclic, and the curve to be infinite.

\begin{figure}[htb]
	\centering
	\includegraphics[width=1.0\textwidth]{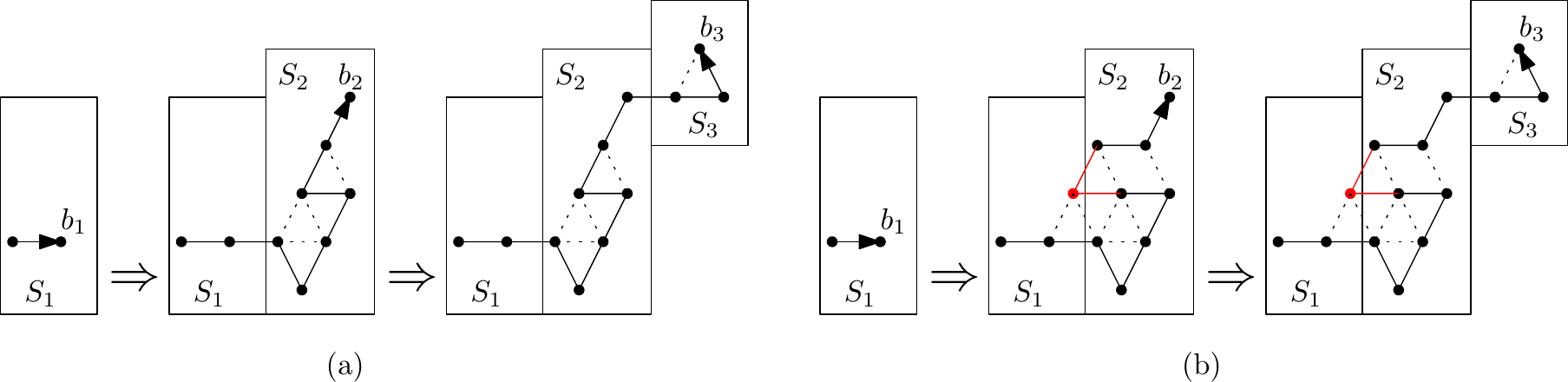}
	\caption{(a) OS draws the target curve and (b) OS does not draw the target curve by \rdef{def_draw}}
	\label{fig_draw}
\end{figure}

We assume the followings to draw the Koch curve:
\begin{enumerate}
	\item The Koch curve consists of an infinite sequence of alternating points and segments on the triangular lattice~$\Lambda_c$, which is 
	with vertical rows with unit triangles pointing left and right. 
	We use a hexagon~$S_p$ of side length~$d$ to represent a point on the curve. For a segment on the curve, we use a shape~$S_l$ of $d$~points ($l+1$~rows in total) and $d+1$~points ($l$~rows in total) in alternative positions which are orthogonal to the direction of the segment (See \rfig{fig_koch_shape} (a)). 
	The OS starts with covering the first $S_p$, and denote the $i$th $S_p$ ($S_l$) by $S_p[i]$ ($S_l[i]$). 
	\item We use constant numbers of beads for segments or points---$p_p$~beads in $S_p$, and $p_l$~beads in $S_l$. This assumption is reasonable in the modular design of the OS.
	\item The period~$p$ of the OS is $p=p_p+p_l$.
\end{enumerate}

\begin{figure}[htb]
	\centering
	\includegraphics[width=0.8\textwidth]{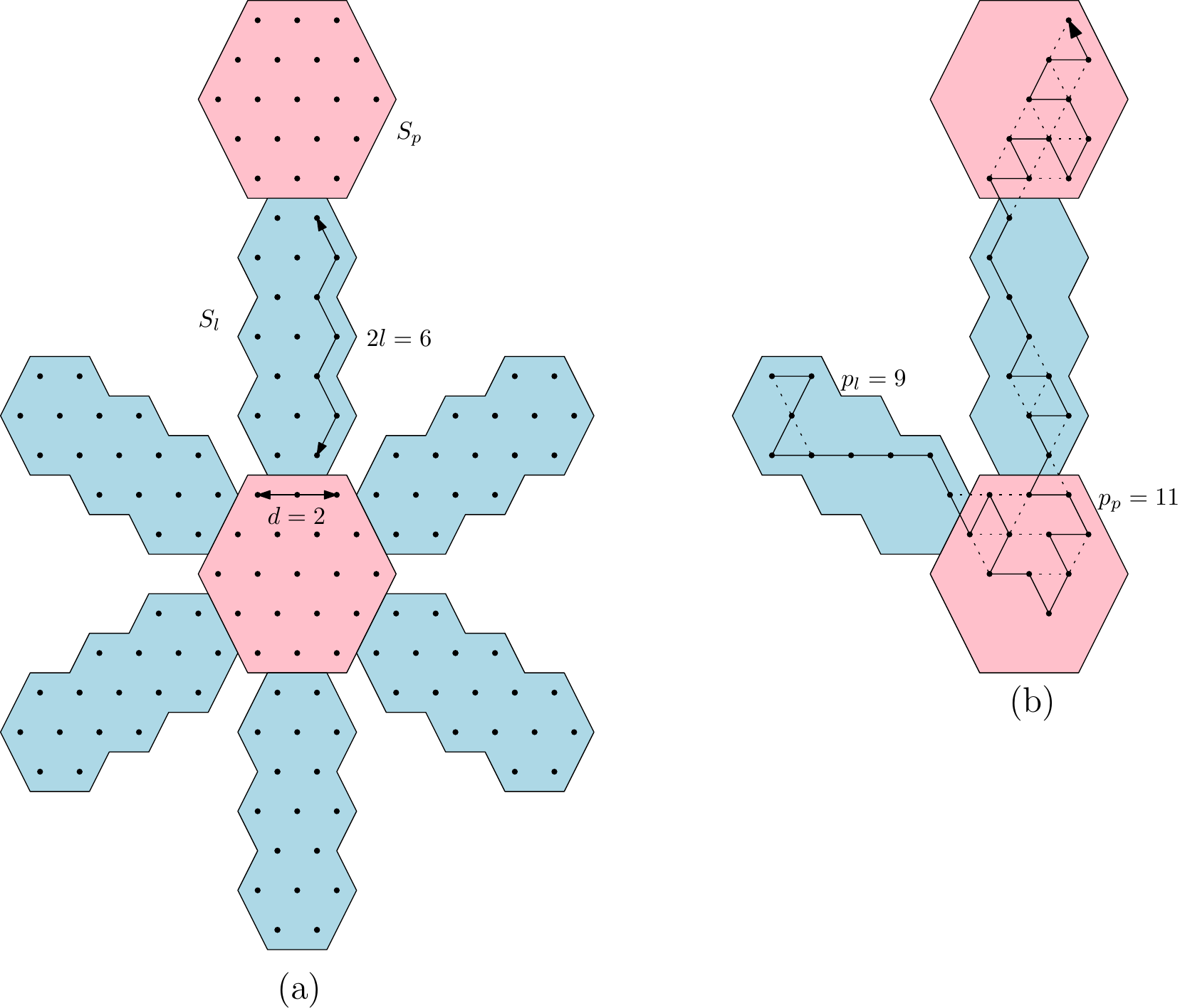}
	\caption{(a) Shapes used to draw the Koch curve (b) An example of an OS with $p_l=9$ and $p_p=11$}
	\label{fig_koch_shape}
\end{figure}

\rfig{fig_koch_shape} shows an example of shapes that can be used to draw the Koch curve, and a part of an OS that draws the curve, following the above assumptions. In \rfig{fig_koch_shape} (a), $S_p$'s are drawn in red, and $S_l$'s are drawn in blue, where $d=2$ and $l=3$. In \rfig{fig_koch_shape} (b), from assumption (2), the number of beads for segments or points are constant, even if there are different paths in different $S_p$'s or $S_l$'s. For example, paths in $S_p$ use $11$~beads and paths in $S_l$ use $9$~beads.

Under these assumptions, we claim the following theorem.
\begin{theorem}
	\label{thm_koch}
	There is no deterministic OS that can draw the Koch curve.
\end{theorem}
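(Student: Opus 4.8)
The plan is to exploit the interplay between the \emph{periodicity} of a cyclic OS and the \emph{aperiodicity} of the Koch curve, using the event-horizon machinery from Section~\ref{sec_pre}. The key observation is that under assumptions (1)--(3), the OS spends exactly $p = p_p + p_l$ beads per ``unit'' of the curve (one point plus one adjacent segment), and the transcript is $w = w_o^\omega$ with $|w_o| = p$. So the transcript bead $w[i_{k-1}+1:i_k]$ that covers the $k$-th shape is, as a \emph{string}, the same word $w_o$ (up to cyclic rotation by a fixed offset) for every $k$ of the same parity/type. First I would make precise that the folding of $w_o$ inside shape $S_k$ is determined entirely by $w_o$ together with the event horizon at the entry bead $w[i_{k-1}+1]$; by \rfig{fig_eh}, two beads with the same event horizon stabilize congruently, and this extends to the event horizon of a partial transcript.

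Next I would set up an infinite pigeonhole argument. The Koch curve makes turns ($+$ and $-$) according to the L-system expansion $F \to F{+}F{-}F{+}F$; the turn sequence is the fixed point of this substitution and is aperiodic, and moreover it takes values bounded away from ``straight'': every unit segment is eventually forced into both a $+$-turn and a $-$-turn context at arbitrarily deep scales. Since there are only finitely many possible event horizons (the hexagon of radius $\delta+1$ has boundedly many partial conformations over a fixed bead alphabet), among the infinitely many segment-shapes $S_l[1], S_l[2], \dots$ there must be two indices $k \neq k'$ whose entry beads have the \emph{same} event horizon. Combined with the fact that the covering transcript word is the same $w_o$, the partial conformation inside $S_l[k]$ and inside $S_l[k']$ are congruent — in particular they have the same net turn (entry-to-exit direction change) and the same exit event horizon. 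Iterating, the OS is forced into an eventually periodic sequence of conformations, hence draws an eventually periodic curve.

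The final step is the contradiction: an eventually periodic drawing cannot represent the Koch curve, because the Koch curve is aperiodic (indeed, self-similar at unboundedly many scales, so no tail of its turn sequence is periodic). More carefully, I would argue that the OS's drawn curve, being built from a cycle of congruent blocks after some finite prefix, has a turn sequence that is eventually periodic with period dividing the length of that cycle of event horizons; but the Koch turn sequence has no eventually periodic tail, since a substitution fixed point over $\{+,-\}$ that is eventually periodic would force the substitution to have a periodic structure, which $F \to F{+}F{-}F{+}F$ does not (one can check the positions of $+$ versus $-$ grow like a base-$4$ pattern that never stabilizes). This contradicts \rdef{def_draw}, which requires each $S_k$ to be covered consistently with the actual Koch geometry.

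The main obstacle will be the second step: carefully justifying that ``same event horizon at entry $\Rightarrow$ congruent block'' really propagates forward across \emph{adjacent} shapes, since the event horizon of $w[i_{k}+1]$ depends not just on the block just laid but potentially on earlier blocks that come geometrically close on the lattice (the Koch curve, while non-self-intersecting, can fold a later segment near an earlier one). I would handle this either by strengthening assumption~(1) so that the shapes $S_p, S_l$ are spaced far enough apart (side length $d$ large relative to $\delta$) that only $O(1)$ previous shapes lie within distance $\delta+1$ — giving a finite but larger state (a bounded window of recent event horizons) to pigeonhole on — or by invoking the locality built into \rdef{def_draw} directly, which guarantees the partial configuration for $w[i_{k-1}+1:i_k]$ stays within $S_k$, so interference from non-adjacent shapes is geometrically impossible. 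Making this window argument airtight, and confirming that the Koch turn sequence genuinely admits no eventually periodic tail, are the two technical points that need the most care.
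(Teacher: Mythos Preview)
Your high-level plan---pigeonhole on event horizons to force eventual periodicity, then contradict the aperiodicity of the Koch turn sequence---matches the paper's strategy. The place where your argument does not yet close is exactly the obstacle you flag at the end, and neither of your two proposed fixes works.

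Fix (a), making $d$ large relative to $\delta$, is not available: $d$ and $l$ are fixed parameters of the drawing, while the theorem must rule out \emph{every} delay $\delta$, so you cannot assume $\delta$ is small compared to $d$. Fix (b), appealing to \rdef{def_draw}, guarantees only that the \emph{path} of $w[i_{k-1}+1:i_k]$ stays inside $S_k$; it says nothing about what lies in the event horizon of those beads, which extends radius $\delta+1$ outside $S_k$ and can overlap with shapes that are sequentially distant but geometrically close. Without controlling that, ``same entry horizon $\Rightarrow$ same exit horizon'' fails, and the propagation step collapses.

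What the paper supplies here is a geometric property of the Koch curve itself: by self-similarity, around any point $q_0$ on the curve, the only segments within distance $3^n$ are those on the arc between two anchor points $q_1,q_2$ that are $3^n$ away in a straight line (\rfig{fig_touch}). This is what bounds the window---for $\delta$ in a given range one shows explicitly (by a finite case analysis, \rfig{fig_koch_01}) that every shape overlapping $E(i)$ lies among $S_p[i-4],\ldots,S_l[i-1]$, so the ``state'' to pigeonhole on is the conformations of the last four $S_{pl}$'s, not an unbounded history. The paper then iterates: for larger $\delta$ one uses the non-touching property at the next scale $3^{n+1}$ to get a larger but still finite window, and repeats. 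This scale-by-scale use of the curve's self-similar separation is the missing ingredient in your sketch; once you have it, the rest of your outline goes through essentially as you wrote it.
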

\begin{proof}
	Assume that there exists a deterministic cyclic OS~$\Xi$ with delay~$\delta$ that draws the Koch curve. First, we assume that $\delta<3d+3l+2$, which is one less than the distance between two $S_p$'s on $\Lambda_o$---apart by two unit distances on $\Lambda_c$. If the delay~$\delta$ has an upper bound, we denote the event horizon for the maximum delay as the ``maximum'' event horizon, and omit the term maximum if the context is clear. For convenience, we use $S_{pl}[i]$ to denote the shape resulted from connecting $S_p[i]$ and $S_l[i]$.

	We consider bead stabilization in $S_{pl}[i]$. We have an event horizon~$E(i)$ that is used to fold the conformation in $S_{pl}[i]$. Due to the delay upper bound, all $S_l$'s that overlap with $E(i)$ are at most three unit distances apart from $S_p[i]$, and all $S_p$'s that overlap with $E(i)$ are at most two unit distances apart from $S_p[i]$ (See \rfig{fig_koch_eh}).
	
	\begin{figure}[htb]
		\centering
		\includegraphics[width=1.0\textwidth]{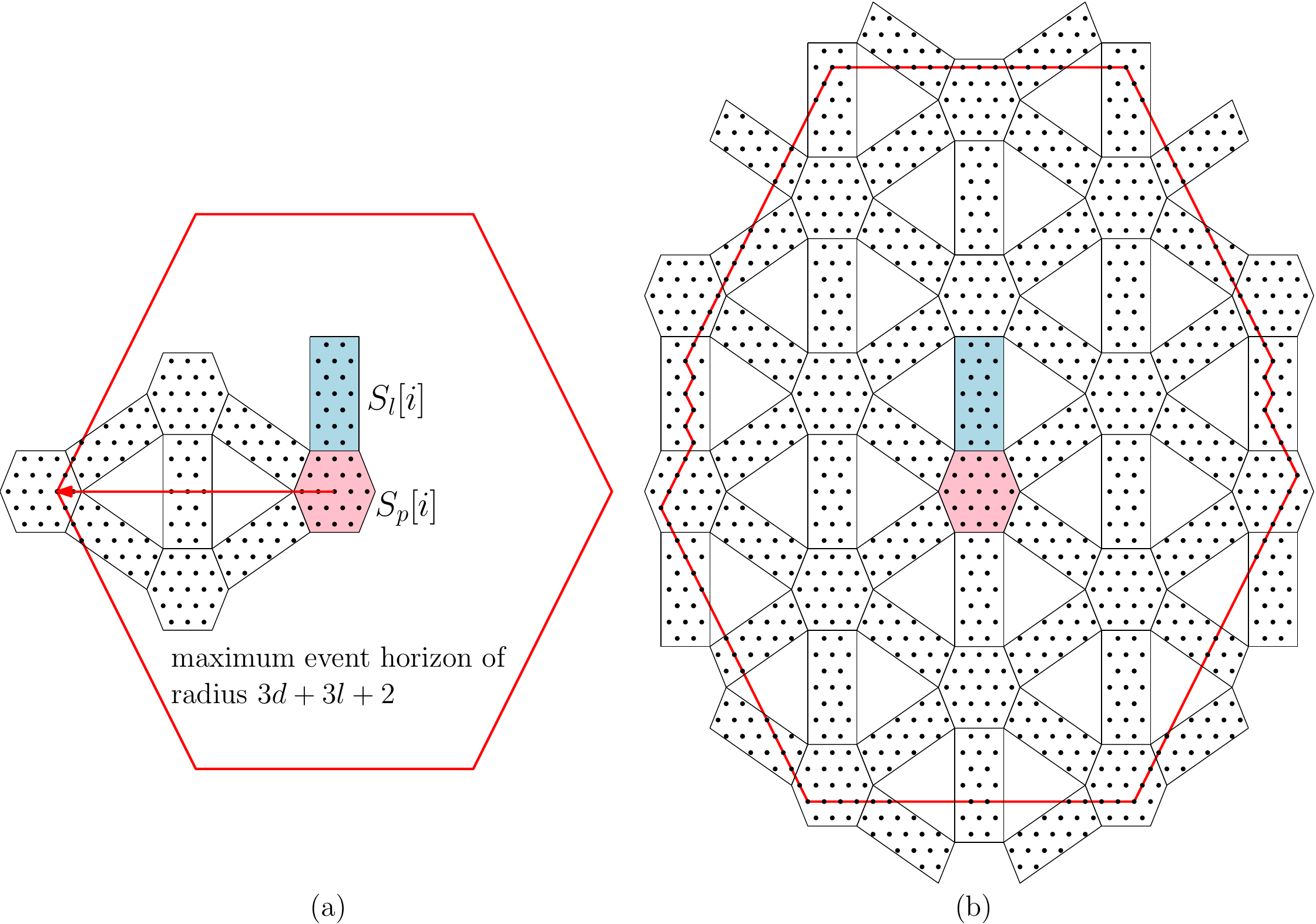}
		\caption{(a) The boundary of a event horizon is represented by a red hexagon. (b) The boundary of the event horizon~$E(i)$ for beads in $S_{pl}[i]$ is represented by a red polygon. $S_l$'s and $S_p$'s that overlap with $E(i)$ are depicted.}
		\label{fig_koch_eh}
	\end{figure}
	
	Since the Koch curve does not touch itself, if a point in $\Lambda_c$ is adjacent to two segments of the curve, there is no segment other than two that are adjacent to the point. Moreover, due to the self-similarity of the curve, the same statement holds for any scale of the power of $3$---for a point~$q_0$ in $\Lambda_c$, if there exist two points~$q_1$ and $q_2$ on the curve that are $3^n$ unit distances straight away from $q_0$, then there is no segment within $3^n$ unit distances from $q_0$, other than segments on the curve from $q_1$ and $q_2$ (See \rfig{fig_touch}).
	
	\begin{figure}[htb]
		\centering
		\includegraphics[scale=0.9]{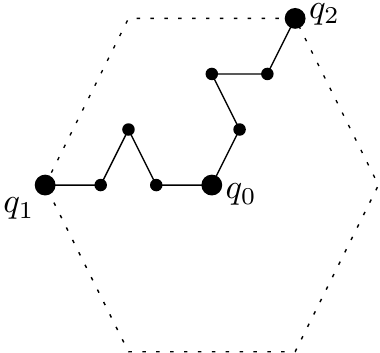}
		\caption{An example of $q_0$, $q_1$ and $q_2$. Points~$q_1$ and $q_2$ are $3$ unit distances straight away from $q_0$. Aside from curves from $q_1$ and $q_2$, there is no segment within $3$ unit distances from $q_0$ (depicted by a dotted hexagon).}
		\label{fig_touch}
	\end{figure}
	
	We have nine different cases for the beads within $S_p$'s and $S_l$'s that overlap with $E(i)$ as shown in \rfig{fig_koch_01}. 
	In each case, we first transcribe a sequence of blue shapes (if they exist), and then a sequence of red shapes. Shapes~$S_p[i]$ and $S_l[i]$ are distinguished by points within the shape. The thick hexagons represent $q_0$'s of distance $3$. We can observe that in all cases, all $S_p$'s and $S_l$'s are at most three unit distances apart from one of $q_0$'s, and they are empty except the red and blue shapes. Moreover, in all cases, all beads within $E(i)$ are from $S_p[i{-}4]$ to $S_l[i{-}1]$, a consecutive sequence of shapes before $S_l[i]$. 
	In other words,	a partial conformation in $S_{pl}[i]$ is dependent to beads in the sequence of shapes from $S_p[i{-}4]$ to $S_l[i{-}1]$.
	
	Since the period of the OS is $p_1+p_2$, we have exactly the same partial transcripts that fold within all $S_{pl}$'s, and, therefore, 
	having the same path yields the same partial conformation. The upper bound for the number of possible paths within $S_p$ ($S_l$) is $5^{p_1}$ ($5^{p_2}$). We have beads from $S_p[i{-}4]$ to $S_l[i{-}1]$ that determine the partial conformation in $S_{pl}[i]$.
	Thus, we have $i$ and $j$ such that $1\le i<j\le 1+5^{4p_l+4p_p}$, and $S_{pl}[i]$ and $S_{pl}[j]$ have exactly the same conformation. Moreover, since beads from $S_p[i{-}4]$ to $S_l[i{-}1]$ are consecutively transcribed, $S_{pl}[i]$ and $S_{pl}[j]$ result in a periodic sequence of segment turns of length~$j-i$ (See \rfig{fig_koch1}). Since the Koch curve is aperiodic, we know that the first assumption $\delta<3l+3d+2$ is wrong.
	\newpage
	\begin{figure}[h!]
		\centering
		\includegraphics[scale=0.55]{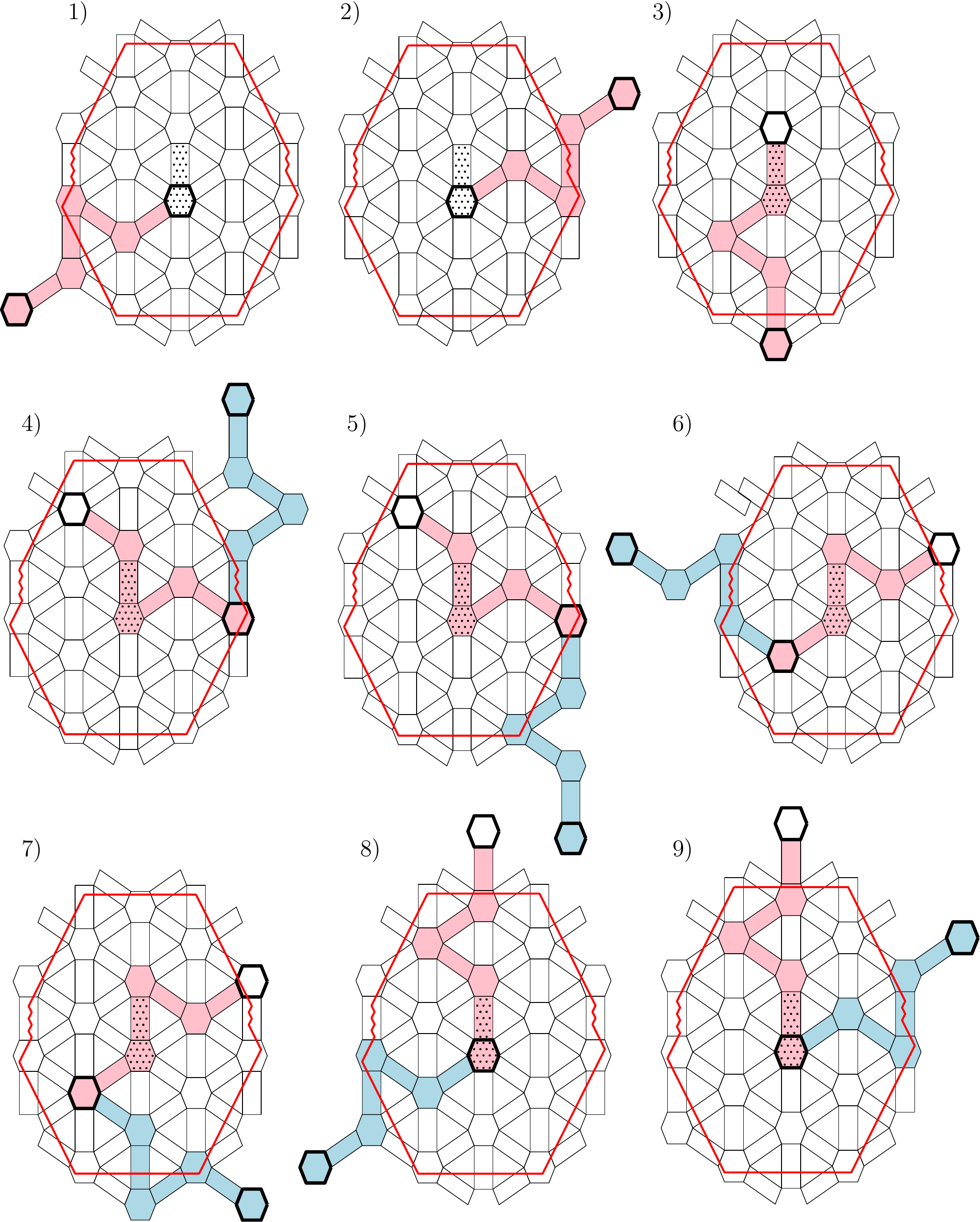}
		\caption{Nine cases of the beads in the union of event horizons. }
		\label{fig_koch_01}
	\end{figure}

	\begin{figure}[htb]
		\centering
		\includegraphics[width=1.0\textwidth]{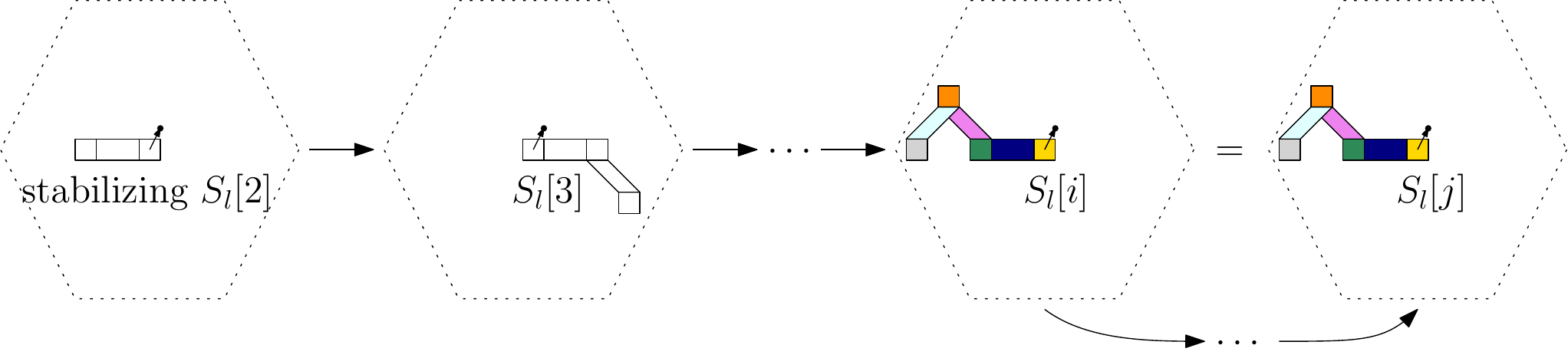}
		\caption{A series of event horizons (in dotted hexagons) for the first bead of $S_l[i]$. Shapes~$S_l$'s and $S_p$'s are simplified for better readability. There exists $1\le i<j\le 5^{4p_l+4p_p}{+}1$ such that they have exactly same event horizon for beads within $S_p[i]$ ($S_l[i]$) and $S_p[j]$ ($S_l[j]$).}
		\label{fig_koch1}
	\end{figure}
	
	Now we assume that $3l+3d+2\le\delta<12l+12d+11$, which is the distance between two $S_p$'s apart by $8$ unit distances in the lattice for the Koch curve. Suppose that we want to stabilize beads in $S_{pl}[i]$. The event horizon for beads in $S_{pl}[i]$ covers only $S_p$'s and $S_l$'s which are within $8$~unit distances from $S_p[i]$ in the lattice for the Koch curve. Similar to the previous case, we can find at most two $q_i$'s of distance~$9$, and all $S_p$'s and $S_i$'s are empty except those who represent the curve from $q_i$'s. Due to the length constraint, there can be at most $32$ $S_p$'s and $S_i$'s that overlap with the event horizon. Thus, among $S_{pl}[1]$ to $S_{pl}[5^{32p_l+32p_p}+1]$, there should exist $S_{pl}[i]$ and $S_{pl}[j>i]$ that have exactly the same event horizon for points within, and we know that the second assumption is also wrong. Similarly, we can expand the proof for arbitrarily large $\delta$. 
	Therefore, we know that for any given $\delta$, there is no delay-$\delta$ deterministic OS that can draw the Koch curve.
	
	Note that we can remove the third assumption about the OS period. If a period~$p$ is not $p_p+p_l$, beads in $S_{pl}[i]$ and $S_{pl}[i{+}pj]$ are exactly the same. Thus, for instance, if we assume that $\delta<3l+3d+2$, among $S_{pl}[1]$ to $S_{pl}[p\cdot 5^{4p_l+4p_p}{+}1]$, there should exist $S_{pl}[i]$ and $S_{pl}[j>i]$ that have exactly the same event horizon and produce the same conformation.
\end{proof}

\section{Impossibility of the Infinite Minkowski Curve}
\label{sec_min}

We study another example of infinite fractal curves---the Minkowski curve.
The Minkowski curve starts from a segment, then recursively alternates each line segment as follows:
\begin{enumerate}
	\item Divide the line segment into four segments (we call these segment 1 to 4 from the start) of equal length.
	\item Draw a square with segment 2 as a side to the left of the original segment, and the other square with segment 3 as a side to the right.
	\item Remove segments 2 and 3.
\end{enumerate}
Using the L-system, the Minkowski curve can be encoded as follows:
\begin{itemize}
	\item Variable: $F$
	\item Constants: $+,-$
	\item Axiom: $F$
	\item Production Rule: $F\to F+F-F-FF+F+F-F$,
\end{itemize}
where $F$ denotes a segment, $-$ denotes $90^\circ$ right turn and $+$ denotes $90^\circ$ left turn. \rfig{fig_min_ex} illustrates the Minkowski curve after three iterations.

\begin{figure}[htb]
	\centering
	\includegraphics[width=0.6\textwidth]{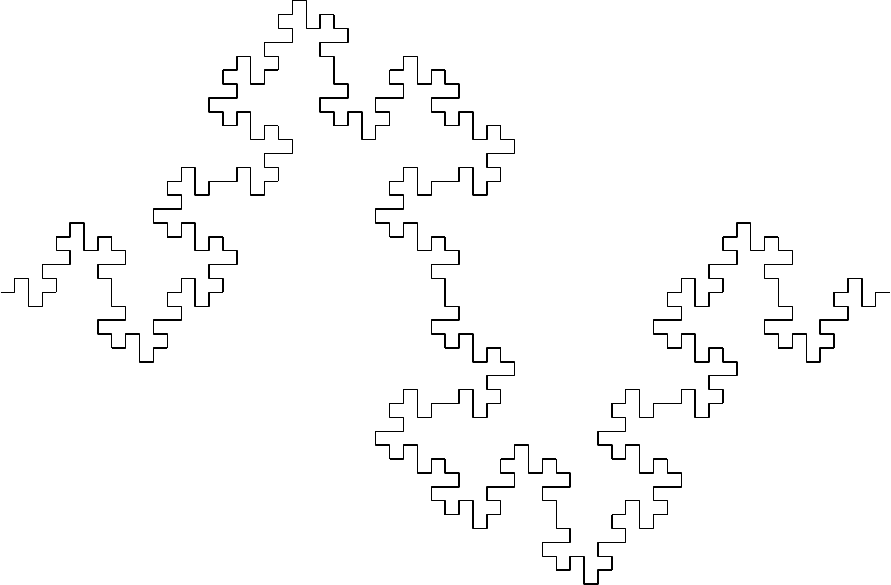}
	\caption{The Minkowski curve after three iterations}
	\label{fig_min_ex}
\end{figure}

For the Minkowski curve case, since the OS works on the triangular lattice, we assume that the Minkowski curve is slanted to fit into the triangular lattice---a square in the square lattice is mapped into a rhombus in the triangular lattice. We call the lattice the rhombus lattice. We assume the followings for drawing the Minkowski curve:
\begin{enumerate}
	\item The (tilted) Minkowski curve consists of an infinite sequence of alternating points and segments on the rhombus lattice~$\Lambda_r$. We use a parallelogram~$S_l$ of width~$d$ and length~$l$ to represent a segment on the curve, and a rhombus~$S_p$ of side length~$d$ to represents a point on the curve (See \rfig{fig_min_shape}). The OS starts with covering the first $S_p$, and denote the $i$th $S_p$ ($S_l$) by $S_p[i]$ ($S_l[i]$).

	For convenience, we set up a coordinate~$(x,y)$ for the points of the rhombus lattice based on an arbitrary origin, where the unit $x$ vector heads right and the unit $y$ vector heads upper right. Based on the coordinate, we use $S_l(x,y,x',y')$ to represent an $S_l$ that starts from the point~$(x,y)$ and the direction is given by the vector~$(x',y')$. We also use $S_p(x,y)$ to represent an $S_p$ at the point~$(x,y)$ (See \rfig{fig_min_shape}).

	\begin{figure}[htb]
		\centering
		\includegraphics[width=0.6\textwidth]{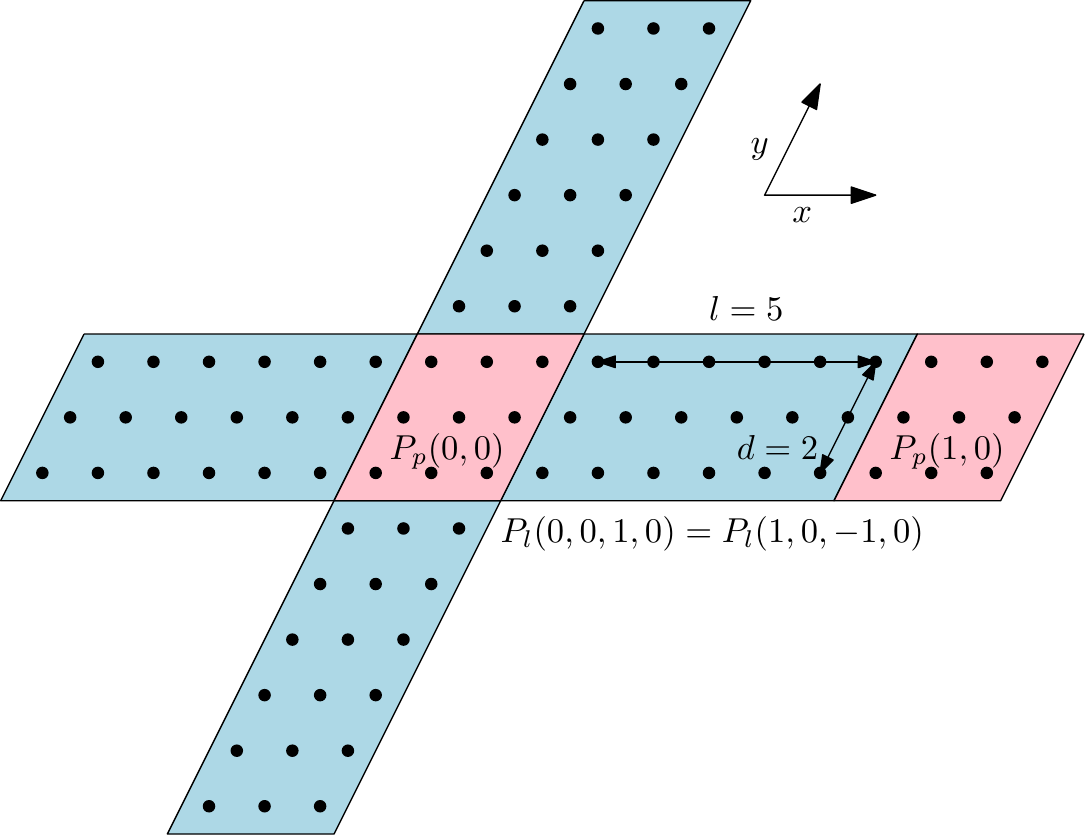}
		\caption{Shapes used to draw the Minkowski curve, and their aliases}
		\label{fig_min_shape}
	\end{figure}

	\item We use constant numbers of beads for a connecting line and a point---$p_p$~beads for $S_p$, and $p_l$~beads for $S_l$.
	\item The OS period~$p$ is $p=p_p+p_l$.
\end{enumerate}

Under these assumptions, we claim the following theorem.
\begin{theorem}
	\label{thm_min}
	There is no deterministic OS that can draw the tilted Minkowski curve.
\end{theorem}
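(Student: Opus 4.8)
The plan is to replay the argument that proves \rthm{thm_koch}, transported to the rhombus lattice $\Lambda_r$ and to the self-similarity scale of the Minkowski curve, which is $4$ (the production rule $F\to F+F-F-FF+F+F-F$ replaces one segment by eight segments spanning four units) in place of the scale $3$ used for the Koch curve. Assume toward a contradiction that a deterministic cyclic OS $\Xi$ with delay $\delta$ draws the tilted Minkowski curve, and write $S_{pl}[i]$ for the shape obtained by gluing $S_p[i]$ and $S_l[i]$. As for the Koch curve I would handle $\delta$ in a nested sequence of ranges: first $\delta$ strictly below the $\Lambda_o$-distance between two consecutive $S_p$'s (a fixed function of $d$ and $l$), which is treated in full, and then the same argument at scale $4^{n}$ for each $n\ge 1$, where every constant is multiplied by the corresponding power of $4$.

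The geometric input I would record first is the clearance property of the tilted Minkowski curve. Since this curve does not touch itself and is self-similar, the following holds at every scale $4^{n}$: if two points of the curve lie $4^{n}$ units away (along the coarse structure of the curve) from a lattice point $q_0$, then the only segments lying within $4^{n}$ of $q_0$ are the ones on the two sub-arcs of the curve incident to those two points. This is the exact analogue of the statement established for the Koch curve and again follows by induction from the self-similarity, once one checks the base case: a single application of the production rule yields a simple arc whose distance from the rest of the figure is bounded below by a constant. The presence of the straight run ``$FF$'' and of $\pm 90^{\circ}$ turns changes the local picture but not the conclusion.

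For the smallest delay range I would then localise the event horizon. Bounding $E(i)$ --- the event horizon used to stabilise the beads of $S_{pl}[i]$ --- by $\delta$, only the $S_p$'s and $S_l$'s within a fixed number of curve-units of $S_p[i]$ can overlap $E(i)$, and by the clearance property these shapes are empty except for a block of consecutive shapes ending at $S_l[i{-}1]$, say $S_p[i{-}m]$ through $S_l[i{-}1]$ for a fixed constant $m$. One must redraw here the figure analogous to the ``nine cases'' of the Koch proof: because the Minkowski production rule is longer and uses right-angle turns, there are more local turn patterns around $S_{pl}[i]$, and in each of them I would check that every bead inside $E(i)$ belongs to the block $S_p[i{-}m],\dots,S_l[i{-}1]$, all of which has been transcribed by the time $S_{pl}[i]$ is folded. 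This case enumeration is the \emph{main obstacle} of the proof; the rest is bookkeeping identical to the Koch case.

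Finally, the pigeonhole step goes through verbatim. Because the period is $p_p+p_l$, the same partial transcript folds inside every $S_{pl}$, so the conformation inside $S_{pl}[i]$ depends only on the conformations inside the block $S_p[i{-}m],\dots,S_l[i{-}1]$; there are at most $5^{m(p_p+p_l)}$ such blocks (at most five choices of the next lattice point per bead). Hence among $S_{pl}[1],\dots,S_{pl}[5^{m(p_p+p_l)}+1]$ two shapes $S_{pl}[i]$ and $S_{pl}[j]$ have identical preceding blocks, hence identical conformations, hence identical subsequent folding, which forces the curve's sequence of turns to be eventually periodic with period $j-i$ --- contradicting the aperiodicity of the Minkowski curve. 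Extending to arbitrary $\delta$ at scale $4^{n}$ and removing the period assumption (replacing the bound by $p\cdot 5^{m(p_p+p_l)}$) are done exactly as at the end of the proof of \rthm{thm_koch}.
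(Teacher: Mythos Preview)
Your outline is the same strategy the paper uses: bound the delay by a sequence of thresholds, localise the event horizon, enumerate the finitely many local patterns, and pigeonhole on the block of preceding shapes. Two points are worth flagging.

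First, the geometric lemma you port verbatim from the Koch proof is not the one the paper uses, and as you state it it is shaky for the Minkowski curve. The Koch clearance property works because that curve is sparse enough (dimension $\log 4/\log 3$) that level-$n$ sub-arcs are genuinely $3^{n}$-separated; the Minkowski curve has dimension $3/2$ and neighbouring sub-arcs can sit at unit distance, so ``a single application of the production rule yields a simple arc whose distance from the rest of the figure is bounded below by a constant'' is not the right inductive statement. The paper instead proves a \emph{curve-index} bound: if $q_{2}$ is the endpoint of a level-$n$ substructure, then every curve point inside the square of side $2\cdot 4^{n}$ centred at $q_{2}$ has index within $8\cdot 8^{n}$ of the start $q_{1}$ of that substructure. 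This is what actually yields a consecutive block $S_{p}[i{-}m],\dots,S_{l}[i{-}1]$; the eleven-case enumeration then pins down $m=61$ for the smallest delay range.

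Second, and relatedly, your scaling remark (``every constant is multiplied by the corresponding power of $4$'') is off: the spatial scale is $4$, but the block length $m$ scales with the number of segments in a substructure, hence by powers of $8$. The paper's second range accordingly jumps from $m=61$ to $m=512$. Once you replace the Koch-style clearance by the curve-index bound and track the $8^{n}$ growth of $m$, your pigeonhole and period-removal steps go through exactly as you wrote them.
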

\begin{proof}
	Similar to the proof for \rthm{thm_koch}, we fist assume that $\delta<3l+3d+5$. Suppose  we want to transcribe beads in $S_p[i]=S_p(x,y)$ and $S_l[i]=S_l(x,y,0,1)$, which represent the segment going up from the point~$(x,y)$. We have three cases for the previous segment:
	\begin{enumerate}
		\item $S_l[i{-}1]=S_l(x,y,-1,0)$, where the previous segment came from the left.
		\item $S_l[i{-}1]=S_l(x,y,1,0)$, where the previous segment came from the right.
		\item $S_l[i{-}1]=S_l(x,y,0,-1)$, where the previous segment came from the low.
	\end{enumerate}
	
	Considering all three cases, we can draw the event horizon~$E(i)$ for all beads in $S_{pl}[i]$ as in \rfig{fig_min_eh1}:
	
	\begin{figure}[htb]
		\centering
		\includegraphics[width=0.7\textwidth]{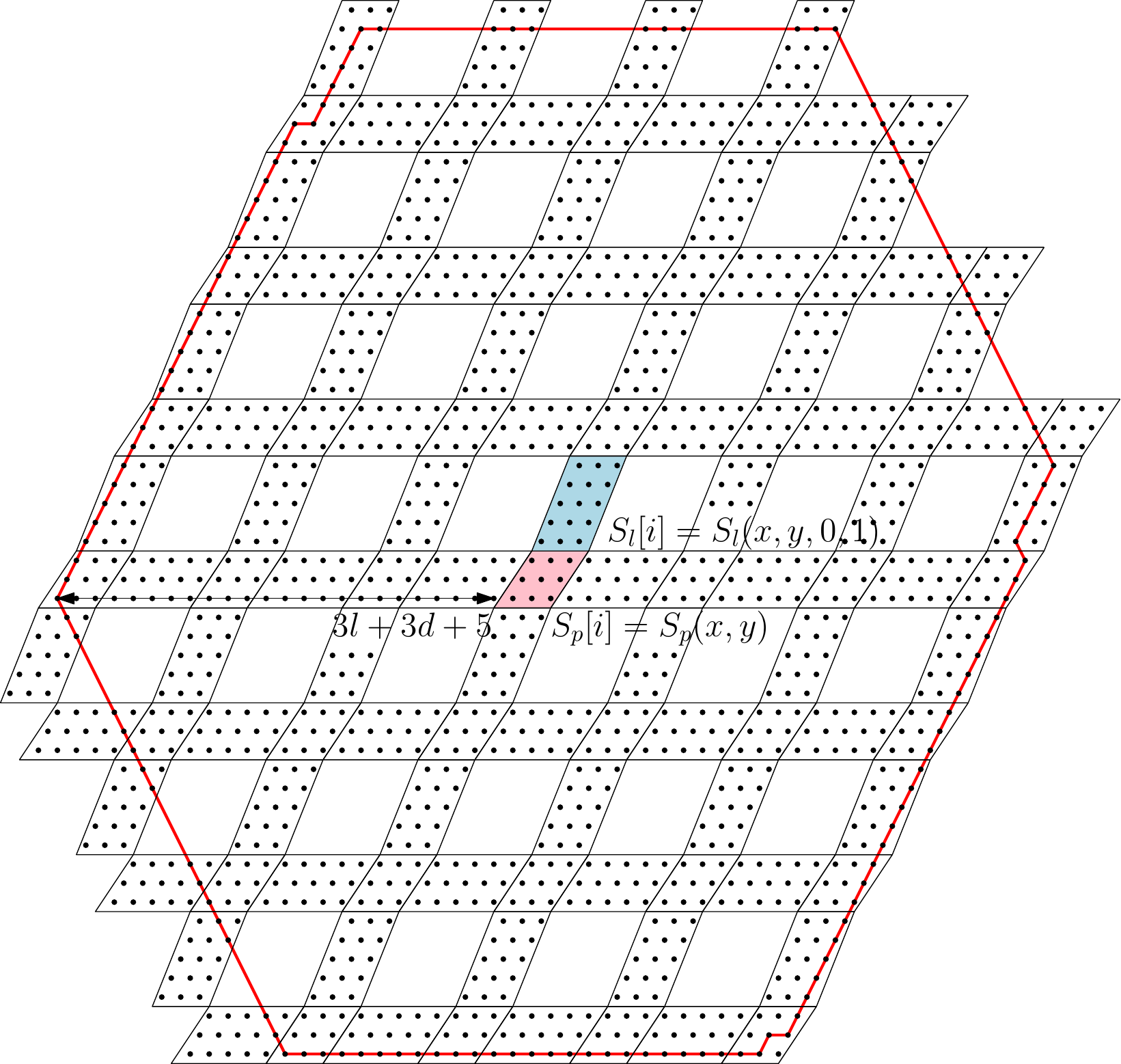}
		\caption{The event horizon~$E(i)$ for beads in $S_{pl}[i]$ is drawn in a red polygon.}
		\label{fig_min_eh1}
	\end{figure}
	
	We observe the property of the curve based on self-similarity. For a given~$n$, let $q_1$ be the starting point of a periodic substructure of length~$8^n$, and $q_2$ be the ending point. Points~$q_1$ and $q_2$ should be $4^n$ distance away. Now, suppose we draw the square of size~$2\cdot 4^n$ with the center~$q_2$. Then, the point~$q_3$ that appears first in the square (including the boundary) is at most $7\cdot 8^n$~segments away from $q_1$ (See \rfig{fig_min_touch} (a)). Moreover, the point~$q_4$ that appears last in the square is at most $8\cdot 8^n$~segments away from $q_1$ (See \rfig{fig_min_touch} (b)).
	
	\begin{figure}[htb]
		\centering
		\includegraphics[scale=1.0]{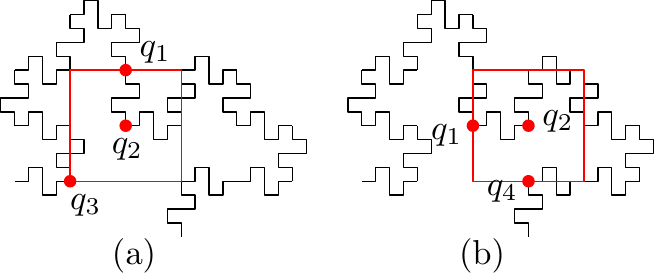}
		\caption{Properties of the curve when $n=1$. (a) $q_3$ is $56$~segments away from $q_1$. (b) $q_4$ is $64$~segments away from $q_1$.}
		\label{fig_min_touch}
	\end{figure}
	
	We have eleven different cases for the beads within $S_p$'s and $S_l$'s that overlap with $E(i)$ (See \rfig{fig_min_01}). In each case, shapes~$S_p[i]$ and $S_l[i]$ are distinguished by points within the shape. The thick rhombi represent $q_2$'s for $n=1$. We can observe that in all cases, the union of event horizons is covered by at most two squares with different $q_2$'s, whose shapes range from $S_p[i{-}5]$ to $S_p[i{+}6]$. According to the observation, we can say that a partial conformation in $S_{pl}[i]$ is dependent to beads in the sequence of shapes from $S_p[i{-}61]$ to $S_l[i{-}1]$. Thus, we have $i$ and $j$ such that $1\le i<j\le 1+5^{61p_l+61p_p}$ and $S_{pl}[i]$ and $S_{pl}[j]$ have exactly the same conformation. Since this results in a periodic sequence of segment turns, our assumption $\delta<3l+3d+5$ is wrong.
	
	Now we assume that $3l+3d+5\le\delta<15l+15d+29$. Suppose that we want to stabilize beads in $S_{pl}[i]$. The event horizon covers only $S_p$'s and $S_l$'s that are within the square of size $32$~unit distances with the center~$S_p[i]$. Similar to the previous case, we can find at most two~$q_2$'s with $n=2$ where the corresponding squares completely cover the event horizon. The earliest $S_p$ that can be such $q_2$ is $S_p[i{-}64]$, and we can say that a partial conformation in $S_{pl}[i]$ is dependent to beads in the sequence of shapes from $S_p[i{-}512]$ to $S_l[i{-}1]$. Thus, we have $i$ and $j$ such that $1\le i<j\le 1+5^{512p_l+512p_p}$ and $S_{pl}[i]$ and $S_{pl}[j]$ have exactly the same conformation. In a similar way, we can extend the proof for arbitrarily large~$\delta$. Similar to the Koch curve case, we can also remove the assumption about the OS period.

	\begin{figure}[htb]
		\centering
		\includegraphics[width=0.8\textwidth]{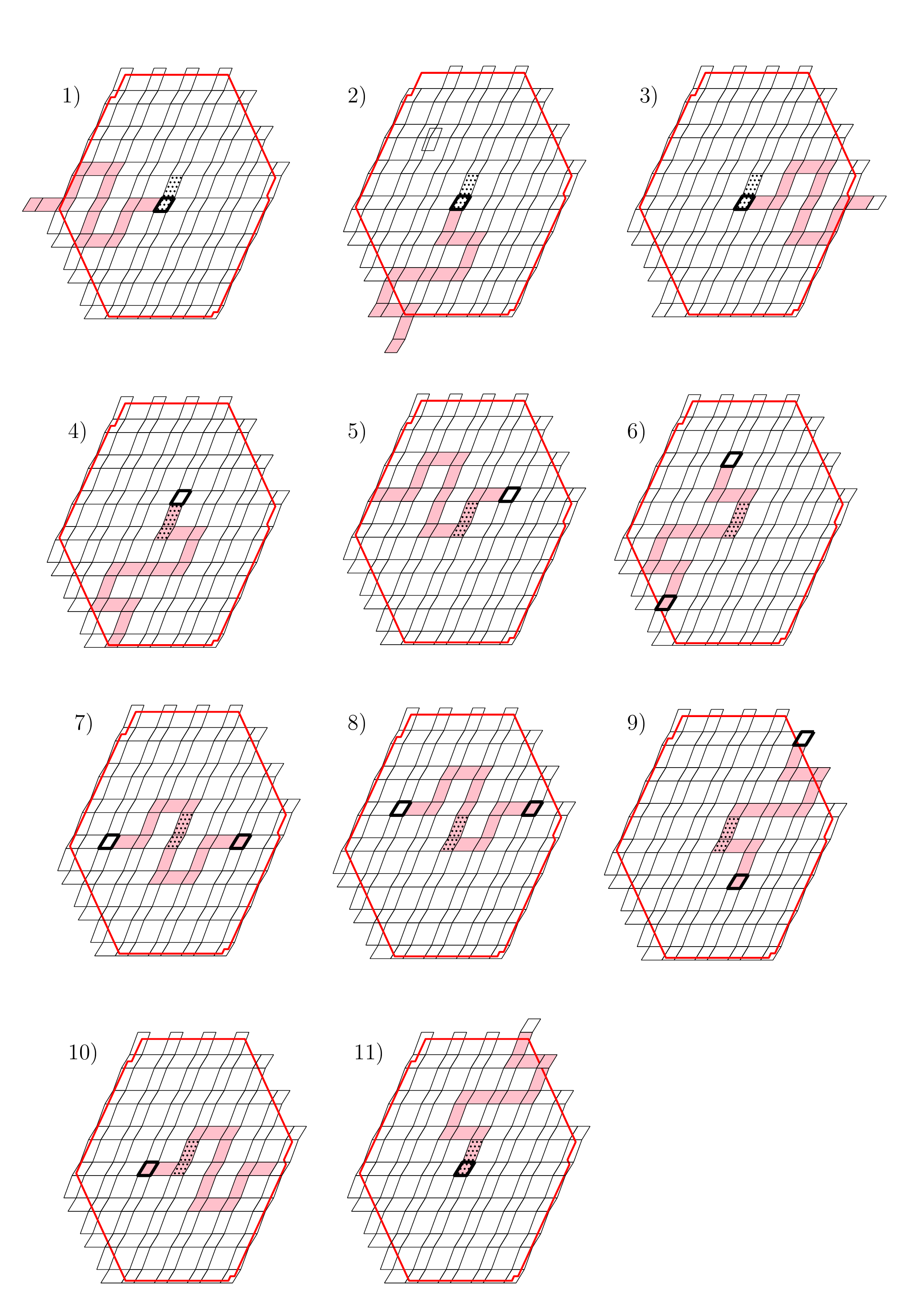}
		\caption{Eleven cases of the beads in the union of event horizons. }
		\label{fig_min_01}
	\end{figure}
	
\end{proof}

\section{Impossibility of Infinite Aperiodic Curves}
\label{sec_imp}

We expand the results from Sections~\ref{sec_koch} and \ref{sec_min} to impossibility of infinite aperiodic curves, which includes fractals. Construction of infinite periodic curves using a cyclic OS seems to be reasonable if we can design a partial OS that folds one period of the curve, and we have one running example---the glider in \rsec{sec_pre}. On the other hand, for infinite aperiodic curves, we propose sufficient conditions that makes curves impossible to fold. We make the following assumptions:
\begin{itemize}
	\item The curve is on an arbitrary lattice~$\Lambda$, and each point (segment) in $\Lambda$ is mapped to a shape~$S_p$ ($S_l$) in the triangular lattice~$\Lambda_o$.
	\item The OS uses $p_p$ ($p_l$) beads for $S_p$ ($S_l$). We say $p_{pl}=p_p+p_l$.
	\item The OS has the period of $p_o$.
	\item The curve is represented by an infinite alternation of $S_p$ and $S_l$, starting from $S_p[1]$. For convenience, we refer to the union of $S_p[i]$ and $S_l[i]$ as $S_{pl}[i]$.
\end{itemize}
Let $\delta(n)$ be an upper bound of the delay~$\delta$, dependent to a given integer~$n$. We propose the condition that curves are not foldable when $\delta\le\delta(n)$, and expand the result to all possible delays. Suppose we want to stabilize beads in $S_{pl}[i]$. Then, we have the maximum event horizon~$E(i,n)$ for beads in $S_{pl}[i]$, which is the union of all event horizons of radius~$\delta(n)+1$ whose centers are points in $S_{pl}[i]$ and points neighboring $S_p[i]$ in $S_l$'s adjacent to $S_p[i]$. Now, for each $i$, we have $S_p[r_{i,n}]$ that appears first in $E(i,n)$. Let 
$D_{i,n}=\max_{1\le j\le i} (j-r_{j,n})$ to be the maximum difference between $j$ and $r_{j,n}$ for all $j\le i$. Then, it takes $1+gcd(p_o,p_{pl})\cdot 5^{D_{i,n}p_{pl}}$ to have exactly the same conformation for the previous $D_{i,n}$~beads, which results in the same maximum event horizon and the same conformation for two shapes~$S_{pl}[j]$ and $S_{pl}[k>j]$. After $S_{pl}[j]$ and $S_{pl}[k]$, it is assured that previous $D_{i,n}$~beads have exactly the same conformation for the consecutively following shapes, and these shapes fold exactly the same. Since $D_{i,n}$ is dependent on $i$, we have the following theorem.

\begin{theorem}
	\label{thm_imp1}
	If there exists $i$ such that $1+gcd(p_o,p_{pl})\cdot 5^{D_{i,n}p_{pl}}\le i$, then it is impossible to draw a given infinite aperiodic curve with a cyclic OS whose delay is less than or equal to $\delta(n)$ and period is $p_o$.
\end{theorem}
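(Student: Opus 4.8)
The plan is to argue by contradiction, following the template of the proofs of \rthm{thm_koch} and \rthm{thm_min} but replacing their explicit self-similarity estimates by the abstract quantity~$D_{i,n}$. Suppose a deterministic cyclic OS~$\Xi$ of period~$p_o$ and delay~$\delta\le\delta(n)$ draws the curve, and fix an index~$i$ witnessing the hypothesis, so $1+\gcd(p_o,p_{pl})\cdot 5^{D_{i,n}p_{pl}}\le i$. The goal is to produce a repetition among the first $i$ shapes that forces a periodic tail of the drawn curve, contradicting aperiodicity.

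The first step is a locality lemma: for every $j\le i$, the partial conformation that $\Xi$ produces inside $S_{pl}[j]$ depends only on (a)~the length-$p_{pl}$ sub-transcript $w[(j{-}1)p_{pl}{+}1:jp_{pl}]$, which is determined by the residue $(j{-}1)p_{pl}\bmod p_o$, and (b)~the partial conformations of the $D_{i,n}$ preceding shapes $S_{pl}[j{-}D_{i,n}],\dots,S_{pl}[j{-}1]$. Since $\delta\le\delta(n)$, every event horizon consulted while stabilizing a bead of $S_{pl}[j]$ is contained in the maximum event horizon~$E(j,n)$ (the union of the radius-$(\delta(n){+}1)$ event horizons centered at the points of $S_{pl}[j]$ and at the points of the adjacent $S_l$'s neighbouring $S_p[j]$); by definition of $r_{j,n}$ this region meets no point of a shape with index below~$r_{j,n}$, and $j-r_{j,n}\le D_{j,n}\le D_{i,n}$. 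Hence the only already-stabilized beads that enter the argmin of~\eqref{eq:fold} while folding $S_{pl}[j]$ lie in $S_{pl}[r_{j,n}],\dots,S_{pl}[j]$, and the only remaining input is the transcript being laid down there, i.e.\ the sub-transcript of $S_{pl}[j]$.

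The second step converts this into a deterministic finite-state recursion. For $j>D_{i,n}$ set $\sigma_j=\bigl((j{-}1)p_{pl}\bmod p_o,\,C(S_{pl}[j{-}D_{i,n}]),\dots,C(S_{pl}[j{-}1])\bigr)$, where $C(S)$ is the partial conformation drawn in the shape~$S$; by the locality lemma $\sigma_{j+1}$ is a fixed function of $\sigma_j$ (compute $C(S_{pl}[j])$, drop the oldest entry, advance the phase by $p_{pl}\bmod p_o$). The number of states is at most $\gcd(p_o,p_{pl})\cdot 5^{D_{i,n}p_{pl}}$: at most $\gcd(p_o,p_{pl})$ values for the phase component (bounding the distinct length-$p_{pl}$ windows of $w_o^\omega$ that occur), and, exactly as in the proofs of \rthm{thm_koch} and \rthm{thm_min}, at most $5^{p_{pl}}$ lattice paths in each of the $D_{i,n}$ remembered shapes. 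Because $i$ exceeds $1$ plus this bound, the pigeonhole principle yields $D_{i,n}<j<k\le i$ with $\sigma_j=\sigma_k$; then $\sigma_{j+t}=\sigma_{k+t}$ for all $t\ge0$, so $C(S_{pl}[j{+}t])=C(S_{pl}[k{+}t])$, and hence the sequence of points, segments, and turns drawn from $S_{pl}[j]$ onwards agrees with that drawn from $S_{pl}[k]$ onwards. Thus the drawn curve is eventually periodic with period $k-j$ shapes, contradicting its aperiodicity (a fractal curve is not even eventually periodic), and the theorem follows.

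I expect the locality lemma to be the step requiring the most care. The subtlety is that when $\Xi$ stabilizes a bead of $S_{pl}[j]$ it also places up to $\delta-1$ nascent lookahead beads, which can physically fall inside $S_{pl}[j{+}1]$ (and further) and are only stabilized later; so the invariant must really be stated over a short run of consecutive shapes, and one must verify that $E(j,n)$ indeed dominates every event horizon arising during that run and that the shapes $S_{pl}[m]$ with $m>j$ meeting $E(j,n)$ are themselves governed, by induction, only by the data in (a)--(b). A genuinely minor point, already visible in the statement, is the exact multiplicative constant: the direct count of distinct length-$p_{pl}$ windows of $w_o^\omega$ is $p_o/\gcd(p_o,p_{pl})$, so to land precisely on $\gcd(p_o,p_{pl})$ one restricts the pigeonhole to a single residue class of shape indices---all of whose shapes share one window---and enlarges the scan range by the same factor; this affects only the constant, not the structure of the argument.
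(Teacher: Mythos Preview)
Your proposal is correct and follows essentially the same approach as the paper: the paper's argument (given in the paragraph immediately preceding \rthm{thm_imp1} rather than in a separate proof block) is precisely the pigeonhole on the tuple of the last $D_{i,n}$ partial conformations together with the transcript phase, yielding two shapes $S_{pl}[j]$ and $S_{pl}[k]$ with identical context and hence a periodic tail. Your formalization via the state map $\sigma_j\mapsto\sigma_{j+1}$ is a cleaner packaging of the same idea, and your remark about the $\gcd(p_o,p_{pl})$ factor versus the naive count $p_o/\gcd(p_o,p_{pl})$ of phases is a genuine observation that the paper leaves implicit (compare the end of the proof of \rthm{thm_koch}, where the multiplier used is simply the period~$p$).
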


In practice, the delay of the OS is bounded by the transcript length. If we consider a cyclic OS that has an infinite transcript, the delay can be arbitrarily large. We extend \rthm{thm_imp1} for arbitrarily large delays and obtain the following statement.

\begin{theorem}
	\label{thm_imp2}
	Suppose there exists a function~$\delta(n)$, where for all $n\ge 1$, there exists $i$ such that $1+gcd(p_o,p_{pl})\cdot 5^{D_{i,n}p_{pl}}\le i$. Then, it is impossible to draw a given aperiodic infinite curve with a cyclic OS whose period is $p_o$.
\end{theorem}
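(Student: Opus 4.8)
The plan is to reduce Theorem~\ref{thm_imp2} to Theorem~\ref{thm_imp1} by a direct case analysis on whether the delay of a hypothetical drawing OS is finite. Suppose, for contradiction, that some cyclic OS~$\Xi$ with period~$p_o$ draws the given aperiodic infinite curve, and let $\delta$ be its delay. Since a cyclic OS has an infinite transcript, $\delta$ is a fixed (finite) positive integer determined by~$\Xi$; it does not grow with the transcript. So the first step is simply to observe that there must exist some $n\ge 1$ with $\delta\le\delta(n)$: if $\delta(n)\to\infty$ this is immediate, and more generally the hypothesis of Theorem~\ref{thm_imp2} only requires that $\delta(n)$ be defined for all $n\ge 1$ with the stated property, so we pick any $n$ large enough that $\delta\le\delta(n)$ (if no such $n$ exists because $\delta(n)$ is bounded, then the class of OSs under consideration is empty for delays above $\sup_n\delta(n)$, and we instead restrict attention to those $\delta$ that are covered — I would state the clean version assuming $\delta(n)$ is unbounded, which is the intended reading given the sentence ``the delay can be arbitrarily large'').

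Fix that~$n$. By the hypothesis of Theorem~\ref{thm_imp2}, for this particular~$n$ there exists an index~$i$ with $1+\gcd(p_o,p_{pl})\cdot 5^{D_{i,n}p_{pl}}\le i$. Now the second step is to invoke Theorem~\ref{thm_imp1} with this $n$: it states precisely that under this inequality, no cyclic OS of period~$p_o$ and delay at most~$\delta(n)$ can draw the curve. Since $\Xi$ has period~$p_o$ and delay $\delta\le\delta(n)$, this contradicts the assumption that $\Xi$ draws the curve. Hence no such~$\Xi$ exists, proving Theorem~\ref{thm_imp2}.

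The only genuinely substantive content is already packaged in Theorem~\ref{thm_imp1}, so the work here is glue. The one place that requires care — and what I expect to be the main obstacle in writing the argument rigorously — is the quantifier handling in the first step: Theorem~\ref{thm_imp1} is parametrized by a fixed~$n$, and one must be sure that the single~$n$ extracted from the bound $\delta\le\delta(n)$ is the same~$n$ for which the hypothesis of Theorem~\ref{thm_imp2} guarantees a witness index~$i$. Because the hypothesis of Theorem~\ref{thm_imp2} asserts the existence of such an~$i$ for \emph{all} $n\ge 1$, any valid choice of~$n$ works, so the two requirements are compatible and the argument closes. I would also add a remark that this is why $\delta(n)$ is allowed to be an arbitrary (in particular, arbitrarily slowly growing) function: one only needs it to eventually exceed the fixed delay of whatever OS an adversary proposes.
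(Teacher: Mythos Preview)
Your proposal is correct and matches the paper's intended argument: the paper does not give a separate proof of \rthm{thm_imp2} but simply presents it as the extension of \rthm{thm_imp1} ``for arbitrarily large delays,'' which is precisely the reduction you carry out. Your observation that one must implicitly assume $\delta(n)$ is unbounded (so that every finite delay is eventually dominated by some $\delta(n)$) is a genuine point the paper leaves tacit, and your handling of the quantifier alignment between the chosen $n$ and the witness $i$ is exactly right.
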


If $D_{i,n}$ is dependent only on $n$ and not $i$, then we may use an arbitrarily large $i$ to satisfy the conditions for all $n$'s regardless of $gcd(p_o,p_{pl})$. 
When we have such a case, the following statement holds.

\begin{theorem}
	\label{thm_imp3}
	If there exists a function~$\delta(n)$ such that, for all $n$, $D_{i,n}$ is independent of $i$, then it is impossible to draw a given infinite aperiodic curve with a cyclic OS regardless of the delay and the period of the OS.
\end{theorem}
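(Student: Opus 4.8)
The plan is to derive \rthm{thm_imp3} from \rthm{thm_imp2} by verifying that the hypothesis of the latter holds for \emph{every} period~$p_o$. Let $\delta(n)$ be the function supplied by the hypothesis, so that for each fixed $n$ the value $D_{i,n}$ does not depend on the index~$i$; write $D_n := D_{i,n}$ for this common value (note that ``independent of $i$'' already forces $D_n$ to be a finite constant, since otherwise the first-appearing shape index would lag further and further behind~$i$).

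First I would fix an arbitrary period $p_o \ge 1$ and an arbitrary $n \ge 1$. Since $D_n$, $p_{pl}=p_p+p_l$, and $p_o$ are all fixed finite numbers, the threshold
\[
B_n \;:=\; 1 + \gcd(p_o,p_{pl})\cdot 5^{D_n\, p_{pl}}
\]
is a fixed finite integer that does not depend on~$i$. The curve is infinite, so the alternating sequence of shapes $(S_{pl}[i])_{i\ge 1}$ is infinite and we may pick any index $i\ge B_n$ (for concreteness $i=B_n$); for that $i$ we get $1+\gcd(p_o,p_{pl})\cdot 5^{D_{i,n}p_{pl}}=B_n\le i$, which is precisely the inequality required at level~$n$ in \rthm{thm_imp2}. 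Since $n$ was arbitrary, the full hypothesis of \rthm{thm_imp2} is met for period~$p_o$, and that theorem rules out every cyclic OS of period~$p_o$, for every delay. As $p_o$ was also arbitrary, no cyclic OS of any period can draw the curve, which is the assertion of \rthm{thm_imp3}.

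The point that needs the most care---and the step I expect to be the crux---is the observation that $D_{i,n}$, equivalently the index of the first shape appearing in the maximal event horizon $E(i,n)$, is a \emph{purely geometric} quantity: it depends only on the fixed shapes $S_p,S_l$, the way consecutive shapes are laid out along the curve, and the hexagonal radius $\delta(n)+1$, but not on the OS parameters $p_p,p_l,p_o$. This independence is what allows the single hypothesis ``$D_{i,n}$ is independent of $i$'' to be combined uniformly with the two outer quantifiers ``for all periods $p_o$'' and ``for all $n$'' without any interference, and it is also why the $\gcd$ factor in $B_n$ stays bounded by~$p_{pl}$. One should finally note, as is already used implicitly in \rthm{thm_imp2}, that $\delta(n)$ is to be taken unbounded in~$n$, so that every concrete finite delay of a cyclic OS is dominated by some $\delta(n)$; otherwise the conclusion would only cover delays up to $\sup_n\delta(n)$.
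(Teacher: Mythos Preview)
Your argument is correct and matches the paper's own justification, which is essentially the single sentence preceding the theorem: since $D_{i,n}$ does not depend on $i$, the threshold $1+\gcd(p_o,p_{pl})\cdot 5^{D_{i,n}p_{pl}}$ is a constant in $i$, so one can take $i$ arbitrarily large and invoke \rthm{thm_imp2} for every period~$p_o$. Your added remarks about $D_{i,n}$ being a purely geometric quantity and about $\delta(n)$ needing to be unbounded are useful clarifications the paper leaves implicit, but the core reduction is the same.
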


Note that both in the proofs for Theorems~\ref{thm_koch} and \ref{thm_min}, we have $\delta(n)$ that makes $D_{i,n}$ independent of $i$, which is the \rthm{thm_imp3} case.

\section{Conclusions}
The oritatami system (OS) is a computational model inspired by RNA cotranscriptional folding, where an RNA transcript folds upon itself while synthesized out of a gene. Since the OS is a geometric computation model, it is natural to consider the problem of constructing fractal curves using the OS. We have formally defined the drawing of the curve by an OS. Then we have proved that it is impossible to draw two infinite fractal curves (Koch curve and Minkowski curve) by a cyclic OS. Moreover, we have proposed sufficient conditions that make
the folding of general infinite curves impossible. Our conjecture is that all fractal curves made by edge replacements are not foldable. However, we cannot directly apply the same approach to all curves, especially to curves that touch themselves such as Heighway dragons. Thus it is open to develop  a new approach for these self-touching curves.

\section*{Acknowledgements}
Han was supported  by the  Basic Science   Research    Program    through    NRF    funded    by    MEST~(2015R1D1A1A01060097). This work is partially supported by NIH R01GM109459, and by NSF's CCF-1526485 and DMS-1800443. This research was also partially supported by the Southeast Center for Mathematics and Biology, an NSF-Simons Research Center for Mathematics of Complex Biological Systems, under National Science Foundation Grant No. DMS-1764406 and Simons Foundation Grant No. 594594.

\bibliographystyle{abbrv}
\bibliography{draft}

\end{document}